\newcommand{\cA}{\mbox{$\cal A$}}
\newcommand{\NN}{{\Bbb N}}
\newcommand{\RR}{{\Bbb R}}
\newcommand{\ZZ}{{\Bbb Z}}
\newcommand{{\uk}}{\mbox{$\underline{k}$}}
\newcommand{\id}{\mbox{\rm id}}
\def\nod(#1,#2){\put(#1,#2){\circle*{.125}}
\put(#1,#2){\makebox(0,0.5){{\small$#2$}}}}%
\def\rod(#1,#2){\put(#1,#2){\circle*{.2}}}
\def\NOD(#1,#2)#3{\put(#1,#2){\circle*{.2}}\put(#1,#2){\makebox(0,0.8){{\small$#3$}}}}
\def\EXX{{\hfill{$\diamondsuit$}}}
\newcounter{exampleNo}
\newtheorem{theorem}{Theorem}[section]
\newtheorem{proposition}[theorem]{Proposition}
\newtheorem{corollary}[theorem]{Corollary}
\newenvironment{example}[1][Example \arabic{exampleNo}.]{\begin{trivlist}\refstepcounter{exampleNo}
\item[\hskip \labelsep {\bfseries #1}]}{\end{trivlist}}
\title{On the Decomposition of Generalized Semiautomata}
\author{Merve Nur Cakir and Karl-Heinz Zimmermann\footnote{Email: k.zimmermann@tuhh.de}\\
Department of Computer Engineering \\
Hamburg University of Technology\\
21071 Hamburg, Germany}
\begin{document}
\maketitle
\begin{abstract}
Semiautomata are abstractions of electronic devices that are deterministic finite-state machines 
having inputs but no outputs.
Generalized semiautomata are obtained from stochastic semiautomata by dropping the restrictions 
imposed by probability.
It is well-known that each stochastic semiautomaton can be decomposed into a sequential product 
of a dependent source and deterministic semiautomaton making partly use of the celebrated theorem 
of Birkhoff-von Neumann.
% that each doubly stochastic matrix can be represented as a convex combination of permutation matrices.
It will be shown that each generalized semiautomaton can be partitioned into a sequential product
of a generalized dependent source and a deterministic semiautomaton.
%In this paper, stochastic automata over monoids as input sets are studied.
%The well-definedness of these automata requires an extension postulate that
%replaces the inherent universal property of free monoids.
%As a generalization of Turakainen's result, it will be shown 
%that the generalized automata over monoids have the same acceptance power as their stochastic counterparts.
%The key to homomorphisms is a commuting property between the monoid homomorphism of input states and 
%the monoid homomorphism of transition matrices.
%Closure properties of the languages accepted by stochastic automata over monoids are studied.
%%They are in a sense parallel to the closure properties of the languages accepted by monoidal automata. 
\end{abstract}
\medskip

\mbox{\bf AMS Subject Classification:} 68Q70, 20M35, 15A04
\medskip

\mbox{\bf Keywords:} Semiautomaton, stochastic automaton, monoid, Birkhoff-von Neumann.

\section{Introduction}

The theory of discrete stochastic systems has been initiated by the work of Shannon~\cite{shannon} and von Neumann~\cite{neumann}.
While Shannon has considered memory-less communication channels and their generalization by introducing states,
von Neumann has studied the synthesis of reliable systems from unreliable components.
The fundamental work of Rabin and Scott~\cite{rscott} about deterministic finite-state automata 
has led to two generalizations.
First, the generalization of transition functions to conditional distributions 
studied by Carlyle~\cite{carl} and Starke~\cite{starke}.
This in turn yields a generalization of discrete-time Markov chains in which the chains are governed 
by more than one transition probability matrix.
Second, the generalization of regular sets by introducing stochastic automata as described by 
Rabin~\cite{rabin}. 
%Stochastic automata are well-investigated~\cite{claus}.

By the work of Turakainen~\cite{tura69}, stochastic acceptors can be 
viewed equivalently as generalized automata in which the ''probability'' is neglected.
This leads to a more accessible approach to stochastic automata~\cite{claus}.

On the other hand, the class of nondeterministic automata~\cite{salomaa} 
can be generalized to monoidal automata,
where the input alphabet corresponds to an arbitrary monoid instead of a free monoid~\cite{kufi,mihov,zim}.
This leads to the class of monoidal automata whose languages are closed under a smaller set of operations 
when compared with regular languages.

A first step into the study of automata theory are semiautomata which are abstractions of electronic
devices that are deterministic finite-state machines having inputs but no outputs~\cite{ginzburg, mihov}.
Generalized semiautomata are obtained from stochastic semiautomata 
by dropping the restrictions imposed by probability~\cite{claus, tura69}.
It is well-known that each stochastic automaton can be decomposed into a sequential product of a dependent
source and deterministic semiautomaton~\cite{buk}.
This result makes use in part of the celebrated theorem of Birkhoff-von Neumann that each doubly stochastic
matrix can be represented as a convex combination of permutation matrices.
In this paper, it will be shown that each generalized semiautomaton can be partitioned 
into a sequential product
of a generalized dependent source and a deterministic semiautomaton.
\medskip

Notation.
Let $X$ be a set.
The set of all mappings on $X$, $T(X)=\{f\mid f:X\rightarrow X\}$, forms a monoid under function composition
$(fg)(x) = g(f(x))$, $x\in X$, and the identity function $\id_X:X\rightarrow X:x\mapsto x$ is the identity element.
The monoid $T(X)$ is called the {\em full transformation monoid\/} of $X$.
%In view of a mapping $f:X\rightarrow Y$ we also write $xf$ instead of $f(x)$.
%Then the function composition becomes $x(fg) = (xf)g$.

%--------------------------------------------------
\section{Semiautomata}
Semiautomata are abstractions of electronic devices 
which are deterministic finite-state machines having input but no output~\cite{ginzburg, mihov}.

A {\em (deterministic) semiautomaton} (SA) %or {\em $M$-automaton} 
is a triple $$A=(S,\Sigma,\{\delta_x\mid x\in \Sigma\})$$ 
where
\begin{itemize}
\item $S$ is the non-empty finite set of {\em states},
\item $\Sigma$ is the set of {\em input symbols},
%\item $(M,\circ,e)$ is a finitely-generated monoid, where $M$ is the set of {\em input symbols}
%and $G_M$ is a generating set of $M$.
\item $\delta_x:S\rightarrow S$ is a (partial) mapping for each $x\in \Sigma$. 
%\item $\delta = \{\delta_u\mid u\in M\}$
%is a set of mappings for $S$ into $S$.
%\item $\Delta \subseteq S\times M\times S$ is a finite set called the {\em transition relation}. 
%\item {\em Extension postulate:} 
%The mapping $\delta:G_M\rightarrow T(S):x\mapsto \delta_x$ 
%can be uniquely extended to a monoid homomorphism $\delta:M\rightarrow T(S):u\mapsto\delta_u$ 
%such that for each word $u=x_1\ldots x_k\in M$,
%\begin{eqnarray}\label{e-post0}
%\delta_u = \delta_{x_1} \cdots \delta_{x_k} 
%\end{eqnarray}
%and particularly $\delta_{\epsilon}=\id_S$ is the identity function.
\end{itemize}

Let $\Sigma^*$ denote the free monoid over the alphabet $\Sigma$.
By the universal property of free monoids~\cite{cliff,mihov},
the mapping $\delta:\Sigma\rightarrow T(S):x\mapsto \delta_x$ 
extends uniquely to a monoid homomorphism $\delta:\Sigma^*\rightarrow T(S):u\mapsto\delta_u$ 
such that for each word $u=x_1\ldots x_k\in\Sigma^*$,
\begin{eqnarray}
\delta_u = \delta_{x_1}\cdots \delta_{x_k}
\end{eqnarray}
and particularly $\delta_\epsilon= \id_S$.
The mapping $\delta$ is called the {\em transition function\/} of~$A$.
Its image $T(A) = \{\delta_u\mid u\in \Sigma^*\}$ is a submonoid of the full transformation monoid 
$T(S)$ generated by $\{\delta_x\mid x\in \Sigma\}$.
%$T(A)$ is called the {\em transition monoid\/} of $A$.
The semiautomaton $A$ is also denoted by $A=(S,M,\delta)$ or  $A=(S^A,M^A,\delta^A)$.

A semiautomaton $A=(S,\Sigma,\delta)$ 
serves as a skeleton of a deterministic finite-state machine that is exactly in one state at a time.
If the semiautomaton $A$ is in state $s$ and reads the word $u\in \Sigma^*$, 
it transits into the state $s'=\delta_u(s)$.

\begin{example}\label{e-sa0}
Consider the semiautomaton
$A = (S,\Sigma,\delta)$ 
with state set $S=\{1,2,3\}$, input alphabet $\Sigma=\{x,y\}$, and transition function $\delta$ 
given by the automaton graph in Fig.~\ref{f-sa0}.
The associated transformation monoid is generated by the transformations
$$\delta_{x} = \left( \begin{array}{ccc} 1 & 2 & 3 \\ 1 & 1 & 1\end{array} \right)
\quad\mbox{and}\quad
\delta_{y} = \left( \begin{array}{ccc} 1 & 2 & 3 \\ 2 & 2 & 3\end{array} \right).$$
We have 
$$
\begin{array}{lll}
\delta_{xx} = \left( \begin{array}{ccc} 1 & 2 & 3\\ 1 & 1& 1\end{array} \right), &&
\delta_{xy} = \left( \begin{array}{ccc} 1 & 2 & 3\\ 2 & 2 & 2\end{array} \right),\\
\delta_{yx} = \left( \begin{array}{ccc} 1 & 2 & 3\\ 1 & 1 & 1\end{array} \right), &&
\delta_{yy} = \left( \begin{array}{ccc} 1 & 2 & 3\\ 2 & 2 & 3\end{array} \right).
\end{array}
$$
Hence, the transformation monoid $T(A)$ is given by $\{\id_S,\delta_x,\delta_y,\delta_{xy}\}$.
\EXX
\end{example}
\begin{figure}[hbt]
\begin{center}
\setlength{\unitlength}{1.0mm}
\begin{picture}(50,20)
\setlength{\unitlength}{1mm}
\put(25,10){\makebox(0,0)[c]{
\mbox{$
\xymatrix{
%\txt{start}\ar@{-->}[d] && \\
*++[o][F-]{1} 
 \ar@(ul,dl)[]_{x}
\ar@/^/[rr]^y
%\ar@{-->}[d] 
&&
*++[o][F-]{2} 
 \ar@(ur,dr)[]^{y}
\ar@/^/[ll]^x \\
& *++[o][F-]{3} 
 \ar@(ur,dr)[]^{y}
\ar@{->}[ul]^x
%\ar@/^/[ul]^a
& \\
%\txt{end} &&
}
$}
}}
\end{picture}
\end{center}
\caption{Semiautomaton.}\label{f-sa0}
\end{figure}
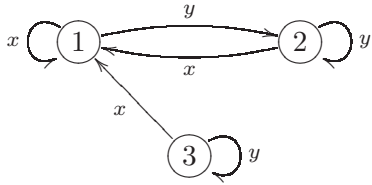

\section{Generalized Semiautomata}

Stochastic automata are a generalization of non-deterministic finite state automata~\cite{claus}.
Generalized automata can be obtained from stochastic automata by dropping the restrictions imposed by
probability~\cite{claus, tura69, zim}.

A {\em generalized semiautomaton\/} (GSA) is a triple 
$$A = (S,\Sigma,\{Q_x\mid x\in \Sigma\}),$$ 
where
\begin{itemize}
\item $S$ is the non-empty finite set of {\em states},
\item $\Sigma$ is the {\em input alphabet\/}, and
\item $Q$ is a collection of $n\times n$ nonnegative matrices $Q_x$, $x\in \Sigma$, 
where $n$ is the number of states.
\end{itemize}

In view of the universal property of free monoids~\cite{cliff,mihov},
the mapping $Q:\Sigma\rightarrow\RR^{n\times n}:x\mapsto Q_x$ extends uniquely to a monoid homomorphism
$Q:\Sigma^*\rightarrow\RR^{n\times n}$ such that for each word $u=x_1\ldots x_k\in \Sigma^*$,
\begin{eqnarray}\label{e-post0}
Q_u = Q_{x_1} \cdots Q_{x_k} 
\end{eqnarray}
and particularly $Q_{\epsilon}=I_n$ is the $n\times n$ identity matrix.
The mapping $Q$ is called the {\em transition function\/} of $A$.
Its image $T(A) = \{Q_u\mid u\in \Sigma^*\}$ is a submonoid of the full transformation monoid $T(S)$ 
generated by $\{Q_x\mid x\in \Sigma\}$.
%$T(A)$ is called the {\em transition monoid\/} of $A$.
The generalized semiautomaton $A$ is also denoted by $A=(S,\Sigma,Q)$  or $A=(S^A,\Sigma^A,Q^A)$. 

The state set $S=\{s_1,\ldots,s_n\}$ can be viewed as the standard basis for the Euclidean vector space $\RR^n$, 
where $s_i$ is the basis vector whose $i$th coordinate is~1 and all others are~0.
In this way, the $(i,j)$the entry of the matrix $Q_u=(s_{ij}^{(u)})$ is given by 
$s_{ij}^{(u)} = s_i^T Q_u s_j$.

\begin{proposition}\label{p-dg}
Each deterministic semiautomaton is a generalized automaton.
\end{proposition}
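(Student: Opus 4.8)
The plan is to represent each transition function by its Boolean (0--1) functional matrix and then to verify that this assignment is compatible with the two monoid homomorphisms involved. Let $A=(S,\Sigma,\{\delta_x\mid x\in\Sigma\})$ be a deterministic semiautomaton with state set $S=\{s_1,\ldots,s_n\}$. For each input symbol $x\in\Sigma$ I would define an $n\times n$ matrix $Q_x$ by setting its $(i,j)$th entry to $1$ if $\delta_x(s_i)=s_j$ and to $0$ otherwise; when $\delta_x$ is only partial and $\delta_x(s_i)$ is undefined, the $i$th row of $Q_x$ is taken to be the zero row. Each $Q_x$ is then a nonnegative $n\times n$ matrix, so that $(S,\Sigma,\{Q_x\mid x\in\Sigma\})$ is a generalized semiautomaton.

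The substance of the argument is to check that the two transition functions agree under this correspondence, that is, that the matrix $Q_u$ obtained from the product (\ref{e-post0}) coincides with the functional matrix of $\delta_u$ for every word $u\in\Sigma^*$. First I would settle the base cases: the empty word gives $\delta_\epsilon=\id_S$, whose functional matrix is $I_n=Q_\epsilon$, while single letters hold by definition. For the inductive step it suffices to show that the product of two functional matrices is the functional matrix of the composed map. Using the standard basis identification and the composition convention $(fg)(s)=g(f(s))$, one computes for $x,y\in\Sigma$
$$(Q_xQ_y)_{ik} = \sum_{j=1}^{n}(Q_x)_{ij}(Q_y)_{jk},$$
and since at most one index $j$ satisfies $\delta_x(s_i)=s_j$, this sum equals $1$ precisely when $\delta_y(\delta_x(s_i))=s_k$, i.e.\ when $\delta_{xy}(s_i)=s_k$. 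Hence $Q_xQ_y$ is the functional matrix of $\delta_{xy}$, and an induction on word length extends this to all $u\in\Sigma^*$.

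The routine bookkeeping aside, the main point to get right is the ordering of the matrix multiplication against the left-to-right composition convention $(fg)(s)=g(f(s))$: the product $Q_xQ_y$ must correspond to ``first $\delta_x$, then $\delta_y$'', so the row indexing of $Q_x$ has to be chosen so that $s_i^TQ_x$ selects the row recording the image $\delta_x(s_i)$. The only other subtlety is the treatment of partial maps, where a zero row of $Q_x$ forces a zero row in the product, faithfully recording that $\delta_{xy}(s_i)$ remains undefined; this is precisely why admitting nonnegative rather than stochastic matrices lets the embedding go through without extra hypotheses. Together these observations realize $A$ as the generalized semiautomaton $(S,\Sigma,\{Q_x\})$, and since the assignment $\delta_x\mapsto Q_x$ is injective, the realization is faithful.
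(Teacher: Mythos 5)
Your proposal is correct and follows essentially the same route as the paper: both assign to each $\delta_x$ its $0$--$1$ functional matrix and observe that $\delta_u\mapsto Q_u$ respects the monoid structure, with your version merely spelling out the index computation and the partial-map convention that the paper leaves implicit.
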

\begin{proof}
Let $A=(S,\Sigma,\delta)$ be a deterministic semiautomaton and let $S=\{s_1,\ldots,s_n\}$.
Define the generalized semiautomaton $B=(S,\Sigma,Q)$,
where for each $x\in\Sigma$, the $(i,j)$th entry of $Q_x$ is $1$ if $\delta_x(s_i)=s_j$ and otherwise~0.
Then the mapping $T(A) \rightarrow T(B):\delta_u\mapsto Q_u$ is a monoid isomorphism.
%In this way, $A$ becomes a generalized semiautomaton.
\end{proof}

%\begin{example}\label{e-xy0}
%Consider the commutative monoid $M$ given by the presentation
%%$\langle x,y\mid xy=yx\rangle$.
%Each element of $M$ has the form $x^iy^j$ for some $i,j\geq 0$.
%
%Define the matrices
%$$Q(x) = \left(\begin{array}{rr} 1 & 1 \\ 0 & 1 \end{array}\right)
%\quad\mbox{and}\quad
%Q(y) = \left(\begin{array}{rr} 1 & -1 \\ 0 & 1 \end{array}\right).$$
%The mapping $Q:\{x,y\}\rightarrow\RR^{2\times 2}$ extends to a unique monoid-homomorphism
%$Q:M\rightarrow \RR^{2\times 2}$, where 
%$$Q(x^iy^j) = Q(x)^iQ(y)^j = 
%\left(\begin{array}{cc} 1 & i-j \\ 0 & 1 \end{array}\right),\quad i,j\geq 0.$$
%\EXX
%\end{example}
%
%\begin{example}\label{e-xy00}
%Consider the commutative monoid $M$ given by the presentation
%$\langle x,y\mid xy=yx\rangle$.
%Each element of $M$ has the form $x^iy^j$ for some $i,j\geq 0$.

%Define the matrices
%$$Q(x) = \left(\begin{array}{rr} 1 & 1 \\ 0 & 1 \end{array}\right)
%\quad\mbox{and}\quad
%Q(y) = \left(\begin{array}{rr} 1 & 0 \\ 1 & 1 \end{array}\right).$$
%The mapping $Q:\{x,y\}\rightarrow\RR^{2\times 2}$ cannot be extended to a monoid-homo\-mor\-phism,
%since $xy=yx$\/ but the matrices $Q(x)$ and $Q(y)$ do not commute, 
%$$Q(x)Q(y) = \left(\begin{array}{rr} 2 & 1 \\ 1 & 1 \end{array}\right)
%\quad\mbox{and}\quad
%Q(y)Q(x) = \left(\begin{array}{rr} 1 & 1 \\ 1 & 2 \end{array}\right).$$
%\EXX
%\end{example}

A generalized semiautomaton $A = (S,\Sigma,P)$ is called {\em stochastic}
if the matrices $P_x$, $x\in\Sigma$, are stochastic, 
i.e., $P_x$ is a matrix of nonnegative real numbers such that each row sum is equal to~1.
The product of stochastic matrices is again a stochastic matrix and so the 
transition monoid $T(A)$ consists of the stochastic matrices $P_u$, $u\in\Sigma^*$.
In particular, the $(i,j)$th element $p(s_j\mid u,s_i)$ of the matrix $P_u$ is the transition probability that
the automaton enters state $s_j$ when started in state $s_i$ and reading the word $u$. 

\begin{example}
Let $m\geq 2$ be an integer.
Put $\Sigma=\{0,\ldots,m-1\}$.
The stochastic semiautomaton $\cA= (\{s_1,s_2\}, \Sigma,  P)$ given by
$$P_x = 
\frac{1}{m} \left(\begin{array}{cc} m-x & x \\ m - x -1 & x+1 \end{array}\right), \quad x\in\Sigma, 
$$
is called {\em $m$-adic semiautomaton}.
For each word $u=x_1\ldots x_k\in\Sigma^*$, 
$$P_u = 
\frac{1}{m^k} \left(\begin{array}{cc} 
m^k - w_k & w_k\\
m^k - w_k -1 & w_k+1
\end{array}\right),
$$
where $w_k =  x_km^{k-1}+\ldots + x_2m + x_1$ 
and the entry $\frac{1}{m^k}w_k$ 
corresponds in the $m$-adic representation to $0.x_k\ldots x_1$.
\EXX
\end{example}

A generalized semiautomaton $A = (S,\Sigma,D)$ is called {\em doubly stochastic}
if the matrices $D_x$, $x\in\Sigma$, are doubly stochastic, 
i.e., $D_x$ is a matrix of nonnegative real numbers such that each row and column sum is equal to~1.
The product of doubly stochastic matrices is again a doubly stochastic matrix and so the 
transition monoid $T(A)$ consists of the doubly stochastic matrices $D_u$, $u\in\Sigma^*$.

%page 44, def.
%A generalized semiautomaton $B=(S^B,\Sigma^B,Q^B)$ is a {\em generalized subsemiautomaton} 
%of the generalized semiautomaton $A=(S^A,\Sigma^A,Q^A)$ 
%if the $S^B\subseteq S^A$, $\Sigma^B\subseteq \Sigma^A$, 
%and $Q_x^B$ is a principle submatrix of $Q_x^A$ for each $x\in\Sigma^B$.
%Note that a {\em principle submatrix} is a square matrix obtained by removing certain rows and columns.
%It may be assumed that the first $|S^B|$ rows and columns are the ones that remain.

%%%%%%%%%%%%%%%%%%%%%%%%%%%%%%%%%%%%%%%%%%%%%%%%%%%%%%%%%%%%%%%%%
\section{Decomposition of Generalized Semiautomata}

The objective is to decompose each generalized semiautomata 
into a sequential product of a generalized dependent 
source and a deterministic semiautomaton.
The corresponding result for stochastic semiautomata has been proved by Bukharaev~\cite{buk}.

A {\em generalized dependent source} is a triple 
$$\Gamma = (\Sigma,\Xi,\{\gamma(z\mid x)\mid x\in\Sigma,z\in\Xi\}),$$ 
where
$\Sigma$ and $\Xi$ are alphabets and
$\gamma:\Sigma\times \Xi\rightarrow\RR_{\geq 0}: (x,z)\rightarrow \gamma(z\mid x)$ is a mapping 
which is extended recursively to $\Sigma^*\times\Xi^*$ as follows:
\begin{itemize}
\item $\gamma(\epsilon\mid\epsilon) = 1$, 
\item $\gamma(v\mid u) = 0$ for all $u\in\Sigma^*$ and $v\in\Xi^*$ with $|u|\ne|v|$, and
\item $\gamma(zv\mid xu) = \gamma(x\mid z)\gamma(u\mid v)$ for all 
$x\in\Sigma$, $u\in\Sigma^*$, $z\in\Xi$ and $v\in\Xi^*$.
\end{itemize}
A generalized dependent source $\Gamma$ is also denoted by $\Gamma = (\Sigma,\Xi, \gamma)$.

In particular, a {\em dependent source} is a generalized dependent source
$\Gamma = (\Sigma,\Xi,\gamma)$, 
where $\Sigma$ and $\Xi$ are alphabets and
for each $x\in\Sigma$, $\gamma(\cdot \mid x)$ defines a (conditional) probability measure on $\Xi$.
This measure can be extended for each $u\in\Sigma^*$ to a (conditional) probability measure $\gamma(\cdot \mid u)$ 
on $\Xi^*$ along the same lines as above.
%recursively as follows:
%\begin{itemize}
%\item $p(\epsilon\mid\epsilon) = 1$, 
%\item $p(v\mid u) = 0$ for all $u\in\Sigma^*$ and $v\in\Xi^*$ with $|u|\ne|v|$, and
%\item $p(vz\mid ux) = p(v\mid u)p(z\mid x)$ for all $u\in\Sigma^*$, $x\in\Sigma$, $v\in\Xi^*$ and $z\in\Xi$.
%\end{itemize}
%%By induction on the length of the words in $\Sigma^*$, it is easy check
%%that for each $u\in\Sigma^*$ probability measure $p(\cdot \mid u)$ on $\Xi^*$.
Note that a dependent source can be viewed as a stochastic input-output automaton with a single 
state~\cite{buk,claus}.

The {\em sequential product} of generalized dependent source 
$\Gamma = (\Sigma,\Xi,\gamma)$ and generalized semiautomaton $B = (S,\Xi,Q^B)$ defines 
a generalized semiautomaton $A = (S,\Sigma,Q^A)$ such that for all $x\in \Sigma$,
\begin{eqnarray}
Q_x^A = \sum_{z\in\Xi} \gamma(z\mid x)\cdot Q_z^B.
\end{eqnarray}
By induction, for all $u\in \Sigma^*$,
\begin{eqnarray}
Q_u^A = \sum_{v\in\Xi^*} \gamma(v\mid u)\cdot Q_v^B.
\end{eqnarray}
%or equivalently for all $s,s'\in S$ and $x\in \Sigma$,
%\begin{eqnarray}
%Q^A(s'\mid x,s) = \sum_{z\in\Xi} \gamma(z\mid x)\cdot Q^B(s'\mid z,s).
%\end{eqnarray}
%More generally, for all $u\in \Sigma^*$,
%\begin{eqnarray}
%Q_u^A = \sum_{v\in\Xi^*} \gamma(v\mid u)\cdot Q_v^B.
%\end{eqnarray}
%Indeed, by induction, for all $x\in\Sigma$, $u\in\Sigma^k$ with $k\geq 0$ and $s,s'\in S$,
%\begin{eqnarray*}
%Q^A(s'\mid xu,s) 
%&=& \sum_{s''\in S} Q^A(s''\mid x,s)Q^A(s'\mid u,s'')\\
%&=& \sum_{s''\in S} \sum_{z\in\Xi} \gamma(z\mid x)Q^B(s''\mid z,s)\sum_{v\in\Xi^k} \gamma(v\mid u)Q^B(s'\mid v,s'')\\
%&=& \sum_{z\in\Xi}\sum_{v\in\Xi^k} \gamma(z\mid x)\gamma(v\mid u)\sum_{s''\in S} Q^B(s''\mid z,s) Q^B(s'\mid v,s'')\\
%&=& \sum_{z\in\Xi,v\in\Xi^k} \gamma(zv\mid xu) Q^B(s'\mid zv,s).
%\end{eqnarray*}

A permutation matrix $P$ is a square binary matrix 
which has exactly one entry of~$1$ in each row and each column and 0's elsewhere.
%In particular, a doubly stochastic matrix is a matrix $P=(p_{ij})$ of nonnegative real numbers such that 
%each row and column sum is equal to~1, i.e., 
%$\sum_i p_{ij} = \sum_j p_{ij} =1$.
%Each doubly stochastic matrix is square, since each row sums to~1 and so the sum of all entries in the matrix
%is equal to the number of rows, and the same holds for the columns.
By the Birkhoff-von Neumann theorem~\cite{davis}, for each $n\times n$ doubly stochastic matrix $P$ there exist 
real numbers $\alpha_1,\ldots,\alpha_N\geq 0$ with $\sum_{i=1}^N\alpha_i=1$ and permutation matrices $P_1,\ldots,P_N$ 
such that
\begin{eqnarray}
P = \alpha_1P_1+\ldots+\alpha_NP_N.
\end{eqnarray}
This representation is also known as Birkhoff-von Neumann decomposition.
Such a representation of a doubly stochastic matrix as a convex combination of permutation matrices may not be unique.
By the Marcus-Ree Theorem~\cite{ree}, $N\leq n^2-2n+2$ for dense matrices.

A square matrix $P$ is called {\em deterministic} if it has exactly one entry of~$1$ 
in each row and 0's elsewhere.
In particular, each permutation matrix is deterministic.  
For each $n\times n$ stochastic matrix $P$ there exist 
real numbers $\alpha_1,\ldots,\alpha_N\geq 0$ with $\sum_{i=1}^N\alpha_i=1$ and deterministic matrices $P_1,\ldots,P_N$
such that
\begin{eqnarray}
P = \alpha_1P_1+\ldots+\alpha_NP_N.
\end{eqnarray}
Such a representation of a stochastic matrix as a convex combination of deterministic matrices may not be unique.

A square matrix $P$ is called {\em semideterministic} 
if in each nonzero row there is exactly one entry of~$1$ and 0's elsewhere.
In particular, each deterministic matrix is semideterministic.
\begin{proposition}
For each nonnegative square matrix $A$, 
there exist real numbers $\alpha_1,\ldots,\alpha_N\geq 0$ and semideterministic matrices $P_1,\ldots,P_N$ such that
\begin{eqnarray}
A = \alpha_1P_1+\ldots+\alpha_NP_N.
\end{eqnarray}
\end{proposition}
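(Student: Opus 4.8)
The plan is to argue by induction on the number of positive entries of $A$, peeling off one semideterministic matrix at a time, in direct analogy with the Birkhoff--von Neumann argument recalled above. The essential observation is that the notion of semideterministic matrix, as opposed to deterministic matrix, permits zero rows, and it is precisely this freedom that allows an arbitrary nonnegative matrix---whose rows need not be positive---to be handled.

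If $A=0$ there is nothing to prove (take the empty combination, $N=0$). So suppose $A=(a_{ij})$ is a nonzero nonnegative $n\times n$ matrix. In each nonzero row $i$ of $A$ I would select one column $j_i$ with $a_{i j_i}>0$, and leave the zero rows untouched. Let $P_1$ be the matrix having a $1$ in each selected position $(i,j_i)$ and $0$ elsewhere. By construction the nonzero rows of $P_1$ are exactly the nonzero rows of $A$, and each of them contains a single $1$, so $P_1$ is semideterministic. Setting
\[
\alpha_1=\min\{a_{i j_i}: i\text{ a nonzero row of }A\},
\]
we have $\alpha_1>0$ since all the selected entries are positive.

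Next I would verify that $B:=A-\alpha_1 P_1$ is again nonnegative and has strictly fewer positive entries than $A$. Indeed, $B$ differs from $A$ only in the selected positions $(i,j_i)$, where the entry drops from $a_{i j_i}$ to $a_{i j_i}-\alpha_1\ge 0$; no entry becomes negative and no new positive entry is created. Moreover, for at least one index $i$ the minimum defining $\alpha_1$ is attained, so $a_{i j_i}-\alpha_1=0$ there, and the number of positive entries strictly decreases. By the induction hypothesis $B=\alpha_2 P_2+\ldots+\alpha_N P_N$ with $\alpha_k\ge 0$ and each $P_k$ semideterministic, whence $A=\alpha_1 P_1+\ldots+\alpha_N P_N$, as required. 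Note that no normalization of the $\alpha_k$ is needed, which is why the statement only asks for nonnegative coefficients rather than a convex combination.

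The only point requiring care is the termination of the peeling, i.e. that the count of positive entries genuinely decreases at each step; this is guaranteed by choosing $\alpha_1$ as the minimum of the selected entries, so that some selected position is annihilated. I do not expect any serious obstacle beyond this bookkeeping. One may also observe that the cruder decomposition $A=\sum_{i,j}a_{ij}E_{ij}$, with $E_{ij}$ the matrix unit having a single $1$ in position $(i,j)$, already exhibits $A$ as a nonnegative combination of semideterministic matrices; the inductive peeling is preferable, however, since it parallels the earlier decompositions and typically yields far fewer terms.
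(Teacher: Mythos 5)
Your proof is correct and follows essentially the same route as the paper: peel off one semideterministic matrix per step, choosing the coefficient as a minimum of selected positive entries so that nonnegativity is preserved and the count of nonzero entries strictly drops, then iterate (the paper phrases this as a finite iteration rather than an induction, and picks a minimal nonzero entry in each row with the global minimum as coefficient, but the mechanism is identical). Your closing remark that $A=\sum_{i,j}a_{ij}E_{ij}$ already gives a valid, if longer, decomposition is a correct observation not made in the paper.
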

\begin{proof}
For each nonnegative square matrix $P=(p_{ij})$ 
let $p_{i,\pi(i)}$ be a minimal nonzero entry in row $i$.
Consider the semideterministic matrix $D=(d_{ij})$ with $d_{i,\pi(i)}=1$ for each $i$ and $d_{ij}=0$ otherwise.
Moreover, put $m(P) = \min\{p_{ij}\mid p_{ij}\ne 0\}$.
Then $P-m(P)D$ is a nonnegative matrix with at least one more zero entry than $P$.
Iterating this step a finite number $N$ of times gives a sequence $(P_k)_{1\leq k\leq N}$
of nonnegative matrices and a sequence $(D_k)_{1\leq k\leq N}$
of semideterministic matrices such that 
$P_1=A$, 
$P_{k+1} = P_k - m(P_k)D_k$ for $1\leq k\leq N$, and 
$P_{N+1}=0$.
This yields the decomposition of $A$ as a linear combination of semideterministic matrices
$A = \sum_{k=1}^N m(P_k)D_k$.
\end{proof}
For doubly stochastic and stochastic matrices, the proof is similar.
%This result was proved by Bukharaev~\cite{buk}.
\begin{example}
Consider the nonnegative matrix
$$A = \left(\begin{array}{ccc}
2 & 4 & 6\\ 2 & 2 & 8\\ 3 & 3 & 6
\end{array}\right).$$
A sequence of reductions showing the selected entries at each step is
$$
\left(\begin{array}{ccc}
\underline 2 & 4 & 6\\ \underline 2 & 2 & 8\\ \underline 3 & 3 & 6
\end{array}\right),
\left(\begin{array}{ccc}
0 & \underline 4 & 6\\ 0 & \underline 2 & 8\\ \underline 1 & 3 & 6
\end{array}\right),
\left(\begin{array}{ccc}
0 & \underline 3 & 6\\ 0 & \underline 1 & 8\\ 0 & \underline 3 & 6
\end{array}\right),
\left(\begin{array}{ccc}
0 & \underline 2 & 6\\ 0 & 0 & \underline 8\\ 0 & \underline 2 & 6
\end{array}\right),
\left(\begin{array}{ccc}
0 & 0 & \underline 6\\ 0 & 0 & \underline 6\\ 0 & 0 & \underline 6
\end{array}\right),
$$
yields the decomposition
\begin{eqnarray*}
A 
&=&
2\left(\begin{array}{ccc}
1 & 0 & 0\\ 1 & 0 & 0\\ 1 & 0 & 0
\end{array}\right)
+1\left(\begin{array}{ccc}
0 & 1 & 0\\ 0 & 1 & 0\\ 1 & 0 & 0
\end{array}\right)
+1\left(\begin{array}{ccc}
0 & 1 & 0\\ 0 & 1 & 0\\ 0 & 1 & 0
\end{array}\right)\\
&&
+\;2\left(\begin{array}{ccc}
0 & 1 & 0\\ 0 & 0 & 1\\ 0 & 1 & 0
\end{array}\right)
+6\left(\begin{array}{ccc}
0 & 0 & 1\\ 0 & 0 & 1\\ 0 & 0 & 1
\end{array}\right).
\end{eqnarray*}
\EXX
\end{example}

\begin{theorem}
Each generalized semiautomaton $A=(S,\Sigma,Q)$ 
can be represented as a sequential product of a generalized dependent source $\Gamma = (\Sigma,\Xi,\gamma)$ 
and a semideterministic semiautomaton $B = (S,\Xi,\delta)$.

In particular,
each stochastic (or strongly stochastic) semiautomaton $A=(S,\Sigma,P)$ can be represented as a 
sequential product of a dependent source $\Gamma = (\Sigma,\Xi,\gamma)$ 
and a deterministic (or permutation) semiautomaton 
$B = (S,\Xi,\delta)$.
\end{theorem}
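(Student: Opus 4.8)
The plan is to reduce the theorem to the preceding Proposition by decomposing each transition matrix of $A$ separately and then assembling the pieces into a single dependent source and a single semiautomaton. First I would fix the state set $S=\{s_1,\ldots,s_n\}$ and, for each input letter $x\in\Sigma$, apply the Proposition to the nonnegative matrix $Q_x$ to obtain a representation $Q_x=\sum_k \alpha_k^{(x)}D_k^{(x)}$ as a nonnegative linear combination of semideterministic matrices. The central observation is that every semideterministic $n\times n$ matrix $D$ encodes a partial mapping on $S$: namely $s_i\mapsto s_j$ whenever the unique $1$ in a nonzero row $i$ sits in column $j$, and undefined on those rows that vanish. Since the definition of a semiautomaton explicitly allows partial mappings, each $D_k^{(x)}$ is a legitimate transition matrix of a semideterministic semiautomaton.

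Next I would build the common output alphabet. Let $\Xi$ be the finite set of all semideterministic matrices occurring in any of these decompositions, and for $z\in\Xi$ write $D_z$ for the corresponding matrix; define $B=(S,\Xi,\delta)$ by setting $Q_z^B=D_z$. For the source I would put $\gamma(z\mid x)$ equal to the total coefficient of $D_z$ in the decomposition of $Q_x$, summing coefficients if the same matrix is produced more than once and taking $0$ if $D_z$ does not occur in that decomposition. Extending $\gamma$ recursively to $\Sigma^*\times\Xi^*$ as in the definition makes $\Gamma=(\Sigma,\Xi,\gamma)$ a generalized dependent source, and by construction $Q_x=\sum_{z\in\Xi}\gamma(z\mid x)\,Q_z^B$ for every $x\in\Sigma$, which is precisely the single-letter condition for the sequential product. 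The word-level identity $Q_u=\sum_{v}\gamma(v\mid u)\,Q_v^B$ then holds automatically, since it was already recorded as a consequence of the definition of the sequential product.

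For the refined statements I would repeat the construction using the sharper decompositions noted just before the Proposition. If $A$ is stochastic, each $P_x$ admits a decomposition into deterministic matrices whose coefficients $\alpha_k^{(x)}$ are nonnegative and sum to $1$; consequently each $\gamma(\cdot\mid x)$ is a probability distribution on $\Xi$, so $\Gamma$ is a genuine dependent source and every $D_z$ is deterministic, i.e. $B$ is a deterministic semiautomaton. If $A$ is doubly (strongly) stochastic, the Birkhoff-von Neumann theorem supplies a convex combination of permutation matrices, and the same argument yields a dependent source together with a permutation semiautomaton.

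I expect the only point genuinely requiring care to be the bookkeeping in passing from the per-letter decompositions to a single alphabet $\Xi$ shared by all input letters, in particular allowing $\gamma(z\mid x)=0$ when a given semideterministic matrix is absent from the decomposition of $Q_x$ and summing multiplicities when a matrix recurs. Checking that the coefficients still sum to $1$ in the stochastic and doubly stochastic cases, so that each $\gamma(\cdot\mid x)$ is a bona fide probability measure, is the essential constraint that separates the ``in particular'' claims from the general statement; everything else follows formally from the Proposition and the definition of the sequential product.
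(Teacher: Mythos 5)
Your proposal is correct and follows essentially the same route as the paper: both reduce the theorem to the preceding Proposition by writing each $Q_x$ as a conical combination of semideterministic matrices, take $\Xi$ to index those matrices (the paper uses all $n\times n$ semideterministic matrices, you use only those that occur, which is an immaterial difference), read off $\gamma(z\mid x)$ from the coefficients, and obtain the refined stochastic and doubly stochastic cases from the deterministic and Birkhoff--von Neumann decompositions. Nothing further is needed.
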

\begin{proof}
Let $\{D_1,\ldots,D_N\}$ denote the collection of $n\times n$ semideterministic matrices.
Put $\Xi = \{1,\ldots,N\}$ and for each $x\in\Sigma$, 
write $Q_x$ as a conical combination of semideterministic matrices
$$Q_x = \sum_{z\in\Xi} \alpha(z,x) D_z.$$
This defines the generalized dependent source $\Gamma = (\Sigma,\Xi,\gamma)$, 
where for each $x\in\Sigma$ and $z\in\Xi$, $$\gamma(z\mid x) = \alpha(z,x),$$
and the deterministic automaton $B = (S,\Xi,\delta)$, where for each $z\in\Xi$,
the transition
$\delta_z:S\rightarrow S$ is given by the matrix $D_z$ as in the proof of Prop.~\ref{p-dg}. 
Then we obtain for each $x\in\Sigma$,
\begin{eqnarray*}
Q_x^A = \sum_{z\in\Xi} \gamma(z\mid x) Q_z^B.
%Q^A(s'\mid, x,s) = \sum_{z\in\Xi} \gamma(z\mid x)\cdot \Delta_{s',\delta_z^B(s)},
\end{eqnarray*}
%where $\Delta$ denotes the Kronecker delta function.
The second part is clear from the above remarks.
\end{proof}

\begin{example}
Consider the generalized semiautomaton $$A=(\{s_1,s_2\},\{x_1,x_2\},\{Q_{x_1},Q_{x_2}\}),$$ 
where
$$
Q_{x_1} = 
\left(\begin{array}{cc} 2 & 3\\ 1 & 0 \end{array}\right)
\quad\mbox{and}\quad
Q_{x_2} =
\left(\begin{array}{cc} 1 & 2\\ 0 & 3 \end{array}\right).
$$
Then 
$$
Q_{x_1} = 
\left(\begin{array}{cc} 1 & 0\\ 1 & 0 \end{array}\right)
+
\left(\begin{array}{cc} 1 & 0\\ 0 & 0 \end{array}\right)
+
3\left(\begin{array}{cc} 0 & 1\\ 0 & 0 \end{array}\right)
$$
and
$$
Q_{x_2} = 
\left(\begin{array}{cc} 1 & 0\\ 0 & 1 \end{array}\right)
+
2\left(\begin{array}{cc} 0 & 1\\ 0 & 1 \end{array}\right).
$$
Put $\Xi = \{z_1,\ldots,z_5\}$ and
$$
\begin{array}{lll}
D_{z_1} = \left(\begin{array}{cc} 1 & 0\\ 1 & 0 \end{array}\right), &
D_{z_2} = \left(\begin{array}{cc} 1 & 0\\ 0 & 0 \end{array}\right), &
D_{z_3} = \left(\begin{array}{cc} 0 & 1\\ 0 & 0 \end{array}\right), \\
D_{z_4} = \left(\begin{array}{cc} 1 & 0\\ 0 & 1 \end{array}\right), &
D_{z_5} = \left(\begin{array}{cc} 0 & 1\\ 0 & 1 \end{array}\right). &
\end{array}
$$
Then
$$ Q_{x_1} = D_{z_1} + D_{z_2} + 3 D_{z_3}
\quad\mbox{and}\quad
Q_{x_2} = D_{z_4} + 2D_{z_5}.$$
This gives the state transition table of the deterministic semiautomaton $B=(S,\Xi,\delta)$, where
$$
\begin{array}{c|ccccc}
\delta^B & z_1 & z_2 & z_3 & z_4 & z_5\\\hline
s_1    & s_1 & s_1 & s_2 & s_1 & s_2\\ 
s_2    & s_1 & -   &  -  & s_2 & s_2 
\end{array}
$$
and the transitions of the generalized dependent source $\Gamma=(\Sigma,\Xi,\gamma)$, where
$$
\begin{array}{c|ccccc}
\gamma    & z_1 & z_2 & z_3 & z_4 & z_5\\\hline
x_1 &  1  &  1  &  3  &  0  & 0 \\ 
x_1 &  0  &  0  &  0  &  1  & 2 \\ 
\end{array}
$$
\EXX
\end{example}

\begin{example}
Reconsider the $m$-adic semiautomaton $\cA=(\{s_1,s_2\},\Sigma,P)$.
For each $x\in\Sigma$,
$$P_x = 
\frac{m-x-1}{m} \left(\begin{array}{cc} 1 & 0 \\ 1 & 0 \end{array} \right)
+
\frac{1}{m} \left(\begin{array}{cc} 1 & 0 \\ 0 & 1 \end{array} \right)
+
\frac{x}{m} \left(\begin{array}{cc} 0 & 1 \\ 0 & 1 \end{array} \right).$$
Put $\Xi=\{z_1,z_2,z_3\}$ and
$$D_{z_1} = \left(\begin{array}{cc} 1 & 0 \\ 1 & 0 \end{array} \right),\quad
D_{z_2} = \left(\begin{array}{cc} 1 & 0 \\ 0 & 1 \end{array} \right),\quad
D_{z_3} = \left(\begin{array}{cc} 0 & 1 \\ 0 & 1 \end{array} \right).$$
Then for each $x\in\Sigma$,
$$P_x = 
\frac{m-x-1}{m} D_{z_1}
+
\frac{1}{m} D_{z_2}
+
\frac{x}{m} D_{z_3}.$$
This provides the state transition table of the deterministic semiautomaton $B = (S,\Xi,\delta)$, where
$$\begin{array}{c|ccc}
\delta^B & z_1 & z_2 & z_3\\\hline
s_1      & s_1 & s_1 & s_2\\
s_2      & s_1 & s_2 & s_2\\
\end{array}$$
and the transitions of the dependent source $\Gamma = (\Sigma,\Xi,\gamma)$, where
for each $x\in\Sigma$,
$$\begin{array}{c|ccc}
\gamma & z_1 & z_2 & z_3\\\hline
x     & \frac{m-x-1}{m} & \frac{1}{m} & \frac{x}{m}\\
\end{array}$$
\EXX
\end{example}


\begin{thebibliography}{99}
\bibitem{ree} M.\ Marcus, R.\ Ree: Diagonals of doubly stochastic matrices, {\em Quarterly J.\ Math.}, {\bf 10}, No.~1, 1959, 296-302.
doi: 10.1093/qmath/10.1.296
%\bibitem{brualdi} R.~A.\ Brualdi: Notes on the Birkhoff algorithm for doubly stochastic matrices, {\it Candian Math.\ Bull.}, {\bf 25}, No.~2 (1982), 191-199.
\bibitem{buk} R.~G.\ Bukharaev: {\em Theorie der stochastischen Automaten}, Teubner, Stuttgart, 1995.
\bibitem{carl} J.~W.\ Carlyle: Reduced forms for stochastic sequential machines, {\it Journal Mathematical Analysis and Applications}, {\bf 7}, No.~2 (1963), 167-165.  doi: 10.1016/0022-247X(63)90045-3
\bibitem{cliff} A.~H.\ Clifford, G.~B.\ Preston: {\em The Algebraic Theory of Semigroups}, Amer.\ Math.\ Soc., New York, 1961.
\bibitem{claus} V.\ Claus: {\em Stochastische Automaten}, Teubner, Stuttgart, 1971.
\bibitem{davis} A.~S.\ Davis: Markov chains as random input automata, {\it Amer.\ Math.\ Monthly}, {\bf 68}, No.~3 (1961), 264-267.  doi: 10.2307/2311462
\bibitem{ginzburg} A.\ Ginzburg: {\em Algebraic Theory of Automata}, Academic Press, New York, 1968.
\bibitem{kufi} V.\ Diekert, M.\ Kufleitner, B.~Steinberg: The Krohn-Rhodes theorem and local divisors, 
{\em Fundam.\ Inform.}, 116(1-4), 65-77, 2012.
doi: 10.1016/s0304-3975(99)00315-1
%\bibitem{fliess} M.~Fliess: Propri\'{e}t\'{e}s bool\'{e}ennes des languages stochastique, {\em Math.\ Systems Theory}, 7(4), 353-359, 1973.  doi: 10.1007/BF01890611
%\bibitem{jacobson} N.~Jacobson: {\em Lectures in Abstract Algebra I}, Van Nostrand Company, 1951.
\bibitem{mihov} S.\ Mihov, K.U.\ Schulz: {\em Finite-State Techniques}, Cambridge Univ.\ Press, New York, 2019.
\bibitem{neumann} J.\ von Neumann: Probabilistic logic and the synthesis of reliable organisms from unreliable components, in: Automata Studies, C.~Shannon and J.~McCarthy (eds), {\it Annals of Mathema\-ti\-cal Studies}, {\bf 34}, Princeton Univ.\ Press, Princeton, NJ (1956).  doi: 10.1515/9781400882618-003
\bibitem{rabin} M.~O.\ Rabin: Probabilistic automata, {\it Information and Control}, {\bf 6}, No.~3 (1963), 230-245.  
doi: 10.1016/S0019-9958(63)90290-0
\bibitem{rscott} M.~O.\ Rabin, D.~Scott: Finite automata and their decision problems, {\it IBM Journal Research Development}, {\bf 3}, No.~3 (1959), 114-125.  doi: 10.1147/rd.32.0114
\bibitem{salomaa} A.\ Salomaa: {\it Theory of Automata}, Pergamon Press, Oxford (1969).
%\bibitem{schberger} M.P.~Sch\"utzenberger: On the definition of a family of automata, {\em Information and Control}, 4, 245-270, 1961.  doi: 10.1016/S0019-9958(61)80020-X
\bibitem{shannon} C.~E.\ Shannon: The mathematical theory of communication, {\it Bell System Technical Journal}, {\bf 5}, No.~1 (1948), 379-423.  doi: 10.1002/j.1538-7305.1948.tb01338.x
\bibitem{starke} P.~H.\ Starke: Stochastische Ereignisse und Wortmengen, {\it Zeitschrift f\"ur Mathematische Logik und Grundlagen der Mathematik}, {\bf 12} (1966), 61-68.  doi: 10.1002/malq.19660120108
\bibitem{tura69} P.~Turakainen: Generalized automata and stochastic languages, {\em Proc.\ Amer.\ Math.\ Soc.}, 21, 303-309, 1969.
 doi: 10.2307/2036989
\bibitem{zim} M.~N.\ Cakir, K.-H.~Zimmermann: On stochastic automata over monoids, TU Hamburg, arxiv:2002.01214, 2020.
\end{thebibliography}
\end{document}